\definecolor{webgreen}{rgb}{0,.5,0}
\definecolor{webbrown}{rgb}{.6,0,0}
\newcommand{\seqnum}[1]{\href{https://oeis.org/#1}{\rm \underline{#1}}}
\begin{document}

\theoremstyle{plain}
\newtheorem{theorem}{Theorem}
\newtheorem{corollary}[theorem]{Corollary}
\newtheorem{lemma}[theorem]{Lemma}
\newtheorem{proposition}[theorem]{Proposition}

\theoremstyle{definition}
\newtheorem{definition}[theorem]{Definition}
\newtheorem{example}[theorem]{Example}
\newtheorem{conjecture}[theorem]{Conjecture}

\theoremstyle{remark}
\newtheorem{remark}[theorem]{Remark}

\title{Hilbert's spacefilling curve described by
automatic, regular, and synchronized sequences}

\author{Jeffrey Shallit\\
School of Computer Science\\
University of Waterloo\\
Waterloo, ON N2L 3G1 \\
Canada\\
\href{mailto:shallit@uwaterloo.ca}{\tt shallit@uwaterloo.ca}}

\maketitle

\begin{abstract}
We describe Hilbert's spacefilling curve in several different ways:
as an automatic sequence of directions,
as a regular and synchronized sequence of coordinates of 
lattice points encountered, and as an automatic bitmap image.
\end{abstract}

\section{Introduction}

In 1891 David Hilbert famously described the construction of  a continuous
curve that fills the unit square \cite{Hilbert:1891}.   So many
papers on this topic have been published since then
(for example, see \cite{Butz:1969,Butz:1971,Sagan:1994,Breinholt&Schierz:1998})
that it seems difficult to say anything new
about it.  Nevertheless, we'll try.   We will describe the curve
in three different ways:   as a $4$-automatic sequence, as a
$4$-regular sequence, and as a $(4,2,2)$-synchronized sequence.
An interesting feature of our approach is that in each case,
we ``guess'' the correct representation, and then use the theorem-prover
{\tt Walnut} to prove our guess is correct \cite{Mousavi:2016}.

Instead of filling the unit square, we will treat a version that
visits every non-negative pair of integers, starting from
the origin $(0,0)$.   At each stage we take the figure
constructed so far, make four copies, flip each copy
appropriately (Figure~\ref{fig1}), and join them together, as illustrated in 
Figure~\ref{fig2}.

Let us agree to write $\tt U$ for up, $\tt D$ for down, 
$\tt R$ for right, and
$\tt L$ for left.  Thus $A_n$, the $n$'th generation of the curve, 
can be written as a string over the alphabet $\{ {\tt U,D,R,L}\}$,
and it is easy to see that in fact $|A_n|= 4^n - 1$.
Note that the moves inserted to connect the pieces
depend on the parity:   to go from
$A_n$ to $A_{n+1}$ when $n$ is odd, we successively
insert $\tt RUL$ to connect the pieces,
but when $n$ is even, we successively insert $\tt URD$.

The first four generations of the curve are encoded as follows:
\begin{align*}
A_0 &= \epsilon \\
A_1 &= {\tt URD} \\
A_2 &= {\tt URDRRULURULLDLU} \\
A_3 &= {\tt URDRRULURULLDLUURULUURDRURDDLDRRRULUURDRURDDLDRDDLULLDRDLDRRURD}
\end{align*}

Notice that $A_n$ is a prefix of $A_{n+1}$ for
all $n \geq 0$.
So we can let 
$${\bf HC} = (h_n)_{n \geq 0} = {\tt URDRRULURULLD} \cdots $$
be the unique infinite string of which $A_1, A_2, A_3, \ldots$ are
all prefixes.

Furthermore $A_n$ is a path from $(0,0)$ to $(2^n-1,0)$ if $n$
is odd, and a path from $(0,0)$ to $(0,2^n - 1)$ if $n$ is
even.  
\begin{figure}[htb]
\begin{center}
\includegraphics[width=6in]{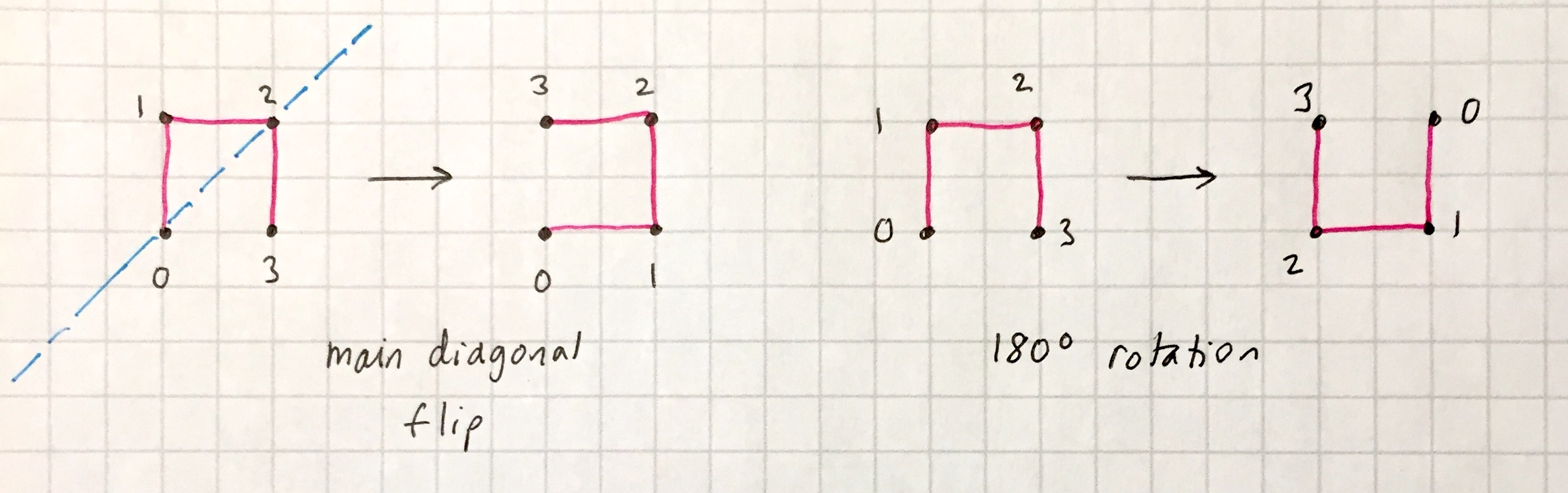}
\end{center}
\caption{How one generation follows from the previous.}
\label{fig1}
\end{figure}

\section{By recurrence}

We now wish to write a recurrence for the $A_n$.  We define 
the codings $t_D , t_H $ as follows:   $t_D$ is a flip about
the main diagonal, and hence $t_D({\tt UDRL}) = \tt RLUD$, while
$t_H$ is a $180^\circ$ rotation, that is,
$t_H ({\tt UDRL}) = \tt DULR$.   

This gives us a formula to compute $A_n$, namely
\begin{align}
A_0 &= \epsilon \nonumber\\
A_{2n+1} & = A_{2n} \, {\tt U} \, t_D(A_{2n}) \, {\tt R} \, t_D(A_{2n}) \, 
{\tt D} \, t_H(A_{2n}) \label{an} \\
A_{2n+2} &= A_{2n+1} \, {\tt R} \, t_D(A_{2n+1}) \, {\tt U} \, t_D(A_{2n+1})  \, {\tt L} \, t_H(A_{2n+1}) \nonumber
\end{align}
for $n \geq 0$.

\begin{figure}[H]
\begin{center}
\includegraphics[width=5.5in]{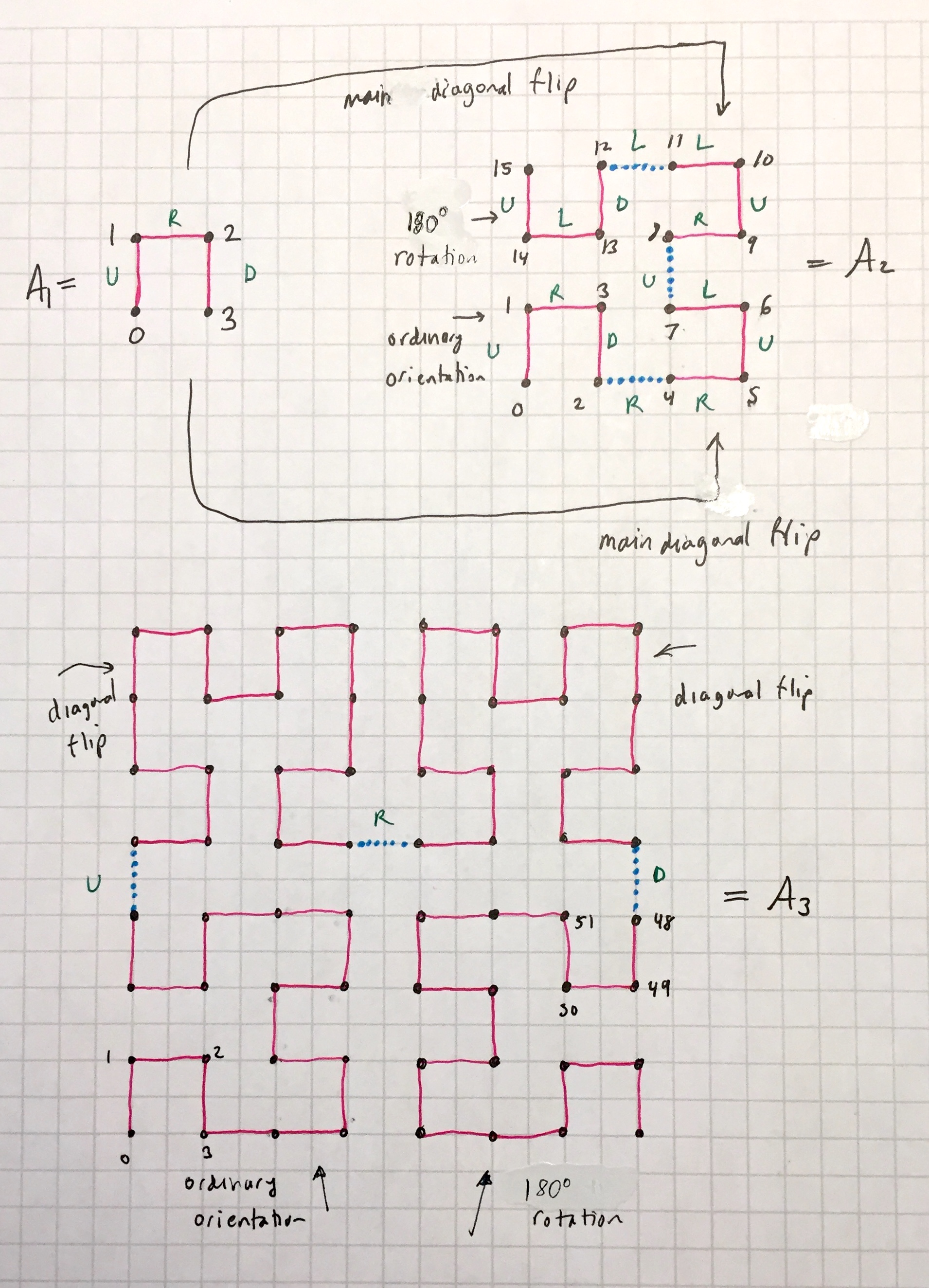}
\end{center}
\caption{The first three generations of the Hilbert curve.}
\label{fig2}
\end{figure}

\section{As image of fixed point of morphism or automatic sequence}
\label{three}

We use the notation $\Sigma_k = \{ 0,1,\ldots, k-1 \}$, and
$(n)_k$ denotes the canonical base-$k$ representation of $n$, starting
with the most significant digit.   

Recall that a sequence $(a_n)_{n \geq 0}$ is $k$-automatic if its
{\it $k$-kernel\/}
$$ \{ (a_{k^e n + i})_{n \geq 0} \, :\, e \geq 0, 0 \leq i < k^e \}$$
is of finite cardinality \cite[\S 6.6]{Allouche&Shallit:2003}.
Alternatively, 
$(a_n)_{n \geq 0}$ is $k$-automatic if
there is a deterministic
finite automaton with output (DFAO) 
that reads $(n)_k$ as input and reaches a state with output $a_n$.

The fact that the length of the
$i$'th generation $A_i$ is so close to $4^i$ strongly suggests
that $H$ might be $4$-automatic.   

To try to determine the DFAO, we can use a ``guessing procedure'' 
based on the Myhill-Nerode theorem \cite[\S 3.9]{Shallit:2009} to find
a good candidate, and then use a theorem-prover to prove that
our guess is correct.   We repeat this strategy throughout the paper.

We find an $8$-state DFAO
$(Q, \Sigma, \Gamma, \delta, q_0, \tau)$ as follows:
\begin{itemize}
\item $Q = \{ 0, 1, \ldots, 7 \}$;
\item $\Sigma = \Sigma_4$;
\item $\Gamma = \{ {\tt U, R, D, L} \}$;
\item $q_0 = 0$;
\item $\delta(q,i)$ and $\tau$ are defined as in Table~\ref{tab4}.
\begin{table}[H]
\begin{center}
\begin{tabular}{c|cccc|c}
\backslashbox{$q$}{$i$} & 0 & 1 & 2 & 3 & $\tau(q)$ \\
\hline
0 & 0 & 1 & 2 & 3 & {\tt U} \\
1 & 1 & 0 & 4 & 5 & {\tt R} \\
2 & 1 & 0 & 4 & 6 & {\tt D} \\
3 & 7 & 6 & 5 & 0 & {\tt R}\\
4 & 0 & 1 & 2 & 7 & {\tt L}\\
5 & 6 & 7 & 3 & 1 & {\tt U} \\
6 & 6 & 7 & 3 & 2 & {\tt L} \\
7 & 7 & 6 & 5 & 4 & {\tt D}
\end{tabular}
\end{center}
\caption{DFAO for the sequence $\bf HC$.}
\label{tab4}
\end{table}
\end{itemize}

In Walnut this DFAO can be represented by the name
{\tt HC}.  Because {\tt Walnut} currently does not allow letters
as output, we use the recoding of the output given by the correspondence
${\tt U} \leftrightarrow {\tt 0}$,
${\tt R} \leftrightarrow {\tt 1}$,
${\tt D} \leftrightarrow {\tt 2}$,
${\tt L} \leftrightarrow {\tt 3}$.

We can verify that this automaton is correct by using {\tt Walnut}.
From Eq.~\eqref{an}, it suffices to check that
for all $n \geq 0$ we have
\begin{align}
{\bf HC}[0] &= {\tt U} \label{eq1} \\
{\bf HC}[4^n..2\cdot 4^n - 1] &= t_D ( {\bf HC}[0..4^n-1] ) \label{eq2} \\
{\bf HC}[2\cdot 4^n..3\cdot 4^n - 2] &= t_D({\bf HC}[0..4^n-2])  \label{eq3} \\
{\bf HC}[3 \cdot 4^n - 1] &= \begin{cases} 
	{\tt L}, & \text{if $n$ odd;} \\
	{\tt D}, & \text{if $n$ even;} 
	\end{cases} \label{eq4} \\
{\bf HC}[3 \cdot 4^n..4^{n+1} - 2] &= t_H({\bf HC}[0..4^n-2]) \label{eq5} \\
{\bf HC}[4^n - 1] &= \begin{cases}
	{\tt R}, & \text{if $n$ odd;} \\
	{\tt U}, & \text{if $n$ even.}
	\end{cases} \label{eq6}
\end{align}
which we can do with {\tt Walnut} as follows:
\begin{verbatim}
reg power4 msd_4 "0*10*":
reg evenpower4 msd_4 "0*1(00)*":
reg oddpower4 msd_4 "0*10(00)*":
eval test2 "?msd_4 HC[0]=@0":
eval test3 "?msd_4 Ax,t ($power4(x) & t<x) =>
     ((HC[t]=@0 <=> HC[x+t]=@1) & (HC[t]=@1 <=> HC[x+t]=@0)
     &(HC[t]=@2 <=> HC[x+t]=@3) & (HC[t]=@3 <=> HC[x+t]=@2))":
eval test4 "?msd_4 Ax,t ($power4(x) & t+1<x) =>
     ((HC[t]=@0 <=> HC[2*x+t]=@1) & (HC[t]=@1 <=> HC[2*x+t]=@0)
     &(HC[t]=@2 <=> HC[2*x+t]=@3) & (HC[t]=@3 <=> HC[2*x+t]=@2))":
eval test5 "?msd_4 Ax ($oddpower4(x) => HC[3*x-1]=@3) &
     ($evenpower4(x) => HC[3*x-1]=@2)":
eval test6 "?msd_4 Ax,t ($power4(x) & t+1<x) =>
     ((HC[t]=@0 <=> HC[3*x+t]=@2) & (HC[t]=@1 <=> HC[3*x+t]=@3)
     &(HC[t]=@2 <=> HC[3*x+t]=@0) & (HC[t]=@3 <=> HC[3*x+t]=@1))":
eval test7 "?msd_4 Ax ($oddpower4(x) => HC[x-1]=@1) &
     ($evenpower4(x) => HC[x-1]=@0)":
\end{verbatim}
\noindent and everything returns {\tt true}.

\section{As system of coordinates and a $4$-regular sequence}

Recall that a $k$-regular sequence is a generalization of
automatic sequence.   Being $k$-regular
means there is a finite subset $S$ of the $k$-kernel
such that each element of the $k$-kernel can
be written as a linear combination of elements of $S$
\cite{Allouche&Shallit:1992,Allouche&Shallit:2003b}.

Suppose we start at $(x_0,y_0) := (0,0)$ and perform unit steps
according to the letters specified by $\bf HC$.
Thus we have
$$
(x_{n+1}, y_{n+1}) = (x_n,y_n) + 
	\begin{cases} 
	(1,0), & \text{ if ${\bf HC}[n] = {\tt R}$}; \\
	(-1,0), & \text{ if ${\bf HC}[n] = {\tt L}$}; \\
	(0,1), & \text{ if ${\bf HC}[n] = {\tt U}$}; \\
	(0,-1), & \text{ if ${\bf HC}[n] = {\tt D}$}; 
	\end{cases}
$$

This
gives us a sequence of ordered pairs specifying the $x$-$y$ coordinates
of the $n$'th point along the curve.    
Table~\ref{tab3} gives the
few values of $((x_n, y_n))_{n \geq 0}$.   The
sequence $(x_n)_{n \geq 0}$ is sequence \seqnum{A059252} 
and $(y_n)_{n \geq 0}$ is \seqnum{A059253} in \cite{Sloane:2021}.
\begin{table}[H]
\begin{center}
\begin{tabular}{c|cccccccccccccccc}
$n$ & 0 & 1 & 2 & 3 & 4 & 5 & 6 & 7 & 8 & 9 & 10 & 11 & 12 & 13 & 14 & 15\\
\hline
$x_n$ & 0 & 0 & 1 & 1 & 2 & 3 & 3 & 2 & 2 & 3 & 3 & 2 & 1 & 1 & 0 & 0\\
$y_n$ & 0 & 1 & 1 & 0 & 0 & 0 & 1 & 1 & 2 & 2 & 3 & 3 & 3 & 2 & 2 & 3\\
\end{tabular}
\end{center}
\caption{First few values of $(x_n,y_n)$ for the Hilbert curve.}
\label{tab3}
\end{table}
In this section we show that $(x_n,y_n)_{n \geq 0}$,
the sequence of coordinates traversed
by the Hilbert curve, is $4$-regular.  Actually, this follows 
immediately from \cite[Theorem 3.1]{Allouche&Shallit:1992}, but
applying this theorem is somewhat messy.

Recall that a linear representation for a $k$-regular
sequence $(a_n)_{n \geq 0}$ consists of a row vector $v$,
a matrix-valued morphism $\gamma$, and a column vector $w$
such that $a_n = v \gamma( (n)_k ) w$ for all $n \geq 0$.
The dimension of $w$ is called the {\it rank\/} of
the linear representation; see \cite{Berstel&Reutenauer:2010}.

A ``guessing procedure'' for $k$-regular sequences 
suggests that the $4$-kernel of $(x_n)_{n\geq 0}$ is 
contained in the linear span 
of the $5$ subsequences
$$ \{ (x_{n})_{n\geq 0},\ (x_{4n})_{n \geq 0},\ (x_{4n+1})_{n \geq 0},\ 
(x_{4n+2})_{n \geq 0},\ (x_{16n})_{n \geq 0} \},$$
and the same for $(y_n)_{n \geq 0}$.   
We can then ``guess'' a number of candidate relations for elements 
of the $4$-kernel for both $x_n$ and $y_n$.
Assuming the guessed relations are correct, by standard techniques,
we can deduce a rank-$5$ base-$4$ linear representation
for $((x_n, y_n))_{n \geq 0}$, namely:
$$ (x_n, y_n) = 
v \gamma( (n)_4 ) w,$$
where $(n)_4$ is the base-$4$ representation of $n$ and
$$v =
\left[
\begin{array}{ccccc}
0&0&0&1&0\\
0&0&1&1&0
\end{array}
\right];
\quad
\gamma(0) =
\left[
\begin{array}{ccccc}
0& 0& 0& 0&-4\\
 1& 0&-1&-1& 4\\
0& 0& 1& 0& 0 \\
0& 0& 0& 1& 0 \\
0& 1& 1& 1& 1 \\
 \end{array}
 \right]; \quad
\gamma(1) =
\left[
\begin{array}{ccccc}
0& 0& 0& 0&-4\\
 0& 0& 0&-1& 0\\
 1&-2&-3&-2& 4\\
0& 2& 3& 3& 0 \\
0& 1& 1& 1& 1
\end{array}
\right];
$$
$$
\gamma(2) =
\left[
\begin{array}{ccccc}
 0& 0& 0& 0&-4\\
 0&-2&-2&-3& 0\\
 0& 0&-1& 0& 0\\
1& 2& 3& 3& 4\\
0& 1& 1& 1& 1
\end{array}
\right]; \quad
\gamma(3) =
\left[
\begin{array}{ccccc}
 0& 0& 0& 0&-4\\
 1&-3&-2&-2& 1\\
-1& 2& 1& 2&-4\\
1& 1& 1& 0& 7 \\
0& 1& 1& 1& 1
\end{array}
\right];
\quad
w =
\left[
\begin{array}{ccccc}
1\\
0\\
0\\
0\\
0\\
\end{array}
\right].$$

We now prove that this is indeed a linear representation
for $((x_n, y_n))_{n \geq 0}$, in a somewhat roundabout way. 

Recall that a finite-state transducer $T$ maps input strings to output
strings.   The output associated with an input are the string or
strings arising
from the concatenation of the outputs of all transitions, provided
processing the string ends in a final state of $T$.   We allow
a transducer to be nondeterministic.  A transducer
is {\it functional\/} if
every input results in at most one output \cite[\S 3.5]{Shallit:2009}.
We need a lemma.
\begin{lemma}
Let $(f(n))_{n \geq 0}$ be a $k$-regular sequence, and let
$\Sigma_k = \{ 0,1,\ldots, k-1 \}$.
Let $T = (Q, \Sigma_k, \Sigma_k, \delta, q_0, F, \rho)$ be a
nondeterministic functional finite-state transducer with transitions on
single letters only, but allowing arbitrary words as outputs on
each transition.  More precisely,
\begin{itemize}
\item $Q = \{ q_0, \ldots, q_{r-1} \}$;
\item $\delta:Q \times \Sigma_k \rightarrow Q$ is the
transition function; and
\item $\rho:Q \times \Sigma_k \rightarrow \Sigma_k^*$ is
the output function;
\item $F$ is a set of final states.
\end{itemize}
Let the domain of $\delta$ and $\rho$ be extended to $\Sigma_k^*$ in the obvious way.
Define $g(n) = f(T((n)_k)))$.
Then $(g(n))_{n \geq 0}$ is also a $k$-regular sequence.
\end{lemma}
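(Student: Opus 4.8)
The plan is to build an explicit linear representation for $g$ from a given linear representation $(v,\mu,w)$ of $f$ (of rank $s$, say) together with the transducer $T$, by taking a tensor-style product of the two. The state space of the product will record (a) the current state $q$ of $T$ after reading a prefix of $(n)_k$, and (b) the "partial" linear-algebra data that keeps track of what $f$ does on the output string emitted so far. Concretely, if on reading a letter $a\in\Sigma_k$ from state $q$ the transducer outputs the word $\rho(q,a)=b_1 b_2\cdots b_m$, then the effect on the $f$-side is multiplication by the matrix $M_{q,a} := \mu(b_1)\mu(b_2)\cdots\mu(b_m)$ (the empty product being the identity when $\rho(q,a)=\varepsilon$). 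So I would define, for each $a\in\Sigma_k$, a block matrix $\gamma'(a)$ of dimension $r\cdot s$ (indexed by pairs $(q,\text{coordinate})$) whose $(q,q')$ block is $M_{q,a}$ if $\delta(q,a)=q'$ and the zero block otherwise. The initial row vector is $e_{q_0}\otimes v$, and the final column vector is $\bigl(\sum_{q\in F} e_q\bigr)\otimes w$ — i.e., we sum over the accepting states of $T$.

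Then I would verify the identity $g(n) = v'\,\gamma'((n)_k)\,w'$ by a straightforward induction on the length of $(n)_k$: reading the input $(n)_k = a_1 a_2 \cdots a_\ell$ drives $T$ deterministically through states $q_0 \to q_1 \to \cdots \to q_\ell$, emits the concatenated output word $T((n)_k) = \rho(q_0,a_1)\rho(q_1,a_2)\cdots\rho(q_{\ell-1},a_\ell)$, and the block structure of $\gamma'$ ensures that $v'\gamma'(a_1)\cdots\gamma'(a_\ell)$ equals $e_{q_\ell}\otimes\bigl(v\,\mu(T((n)_k))\bigr)$. Dotting with $w'$ picks out $q_\ell$, and since $T$ is functional we get exactly the single value $f(T((n)_k)) = g(n)$ when $q_\ell\in F$ (and $0$ otherwise — which is fine, since a linear representation may evaluate to $0$ on strings not in the domain, and in our application $T$ is total so this does not arise). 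Having a linear representation of finite rank is, by definition, equivalent to $k$-regularity, so this establishes the lemma.

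There are a couple of technical wrinkles worth flagging. First, nondeterminism: the statement allows $T$ to be nondeterministic, but $\delta$ and $\rho$ are written as \emph{functions} $Q\times\Sigma_k\to Q$ and $Q\times\Sigma_k\to\Sigma_k^*$, so the transducer is in fact input-deterministic and the "nondeterminism" is immaterial for the product construction — I would simply remark on this, rather than handle a genuinely nondeterministic machine (where one would instead let the $(q,q')$ block be the \emph{sum} of $M_{q,a}$ over all transitions $q\xrightarrow{a}q'$, and functionality guarantees the dotted-out value is still correct). Second, leading zeros: $k$-regularity is usually phrased so that the linear representation must give the same value on $(n)_k$ regardless of leading zeros; I would note that this requires $\delta(q_0,0)=q_0$ and $\rho(q_0,0)=\varepsilon$ (equivalently $M_{q_0,0}$ acts as identity on the relevant vector), which can always be arranged, or else one works with the representation only on canonical inputs and invokes the standard fact that this still yields a $k$-regular sequence.

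The main obstacle is essentially bookkeeping rather than mathematics: one must be careful that the matrix $M_{q,a}=\mu(b_1)\cdots\mu(b_m)$ multiplies the partial vector \emph{on the correct side} and in the \emph{correct order} so that concatenation of transducer outputs corresponds to the matrix product in the order dictated by $\mu((n)_k)$, and that the row/column vectors and the block indexing are consistent throughout. Once the conventions are pinned down, the induction is routine. I expect the cleanest write-up uses the tensor-product notation $e_q\otimes(\cdot)$ above to make the block structure transparent and the induction a one-line computation.
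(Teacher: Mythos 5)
Your construction is exactly the paper's: the same $rs$-dimensional block representation with $(q,q')$-block $\mu(\rho(q,a))$ when $\delta(q,a)=q'$, the row vector supported on $q_0$, the column vector supported on the final states, and the same induction showing $v'\gamma'(x)w'=v\,\mu(T(x))\,w$. The proposal is correct and follows the paper's proof essentially verbatim (your added remarks on determinism and leading zeros are reasonable but not part of the paper's argument).
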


\begin{proof}
Let $(v, \mu, w)$ be a rank-$s$ linear representation for $f$.
We create a linear representation $(v', \mu', w')$ for $g$.

The idea is that $\mu' (a)$, $0 \leq a < k$,
is an $n \times n$ matrix, where
$n = rs$.  It is easiest to think of $\mu'(a)$ as an $r \times r$
matrix, where each entry is itself an $s \times s$ matrix.
In this interpretation, $(\mu'(a))_{i,j} = \mu(\rho(q_i, a)) $
if $\delta(q_i, a) = q_j$.

An easy induction now shows that
if $\delta(q_i,x) = q_j$ and $\rho(q_i, x) = y$, then
$(\mu'(x))_{i,j} = \mu(y)$.
If we now let $v'$ be the vector $[v \quad 0 \quad \cdots \quad 0]$
and $w'$ be the column vector with $w$'s in the positions of the final
states and $0$'s otherwise,
then it follows that
$v' \mu'(x) w' = v \mu(T(x)) w$.
This gives a linear representation for $(g(n))_{n \geq 0}$.
\end{proof}

In particular, there is a simple finite-state transducer that, when
applied to the base-$k$ representation of $n$ gives the representation of 
$n+1$.  
For $k =4$ this transducer is depicted in Figure~\ref{trans} below:
\begin{figure}[H]
\begin{center}
\includegraphics[width=5in]{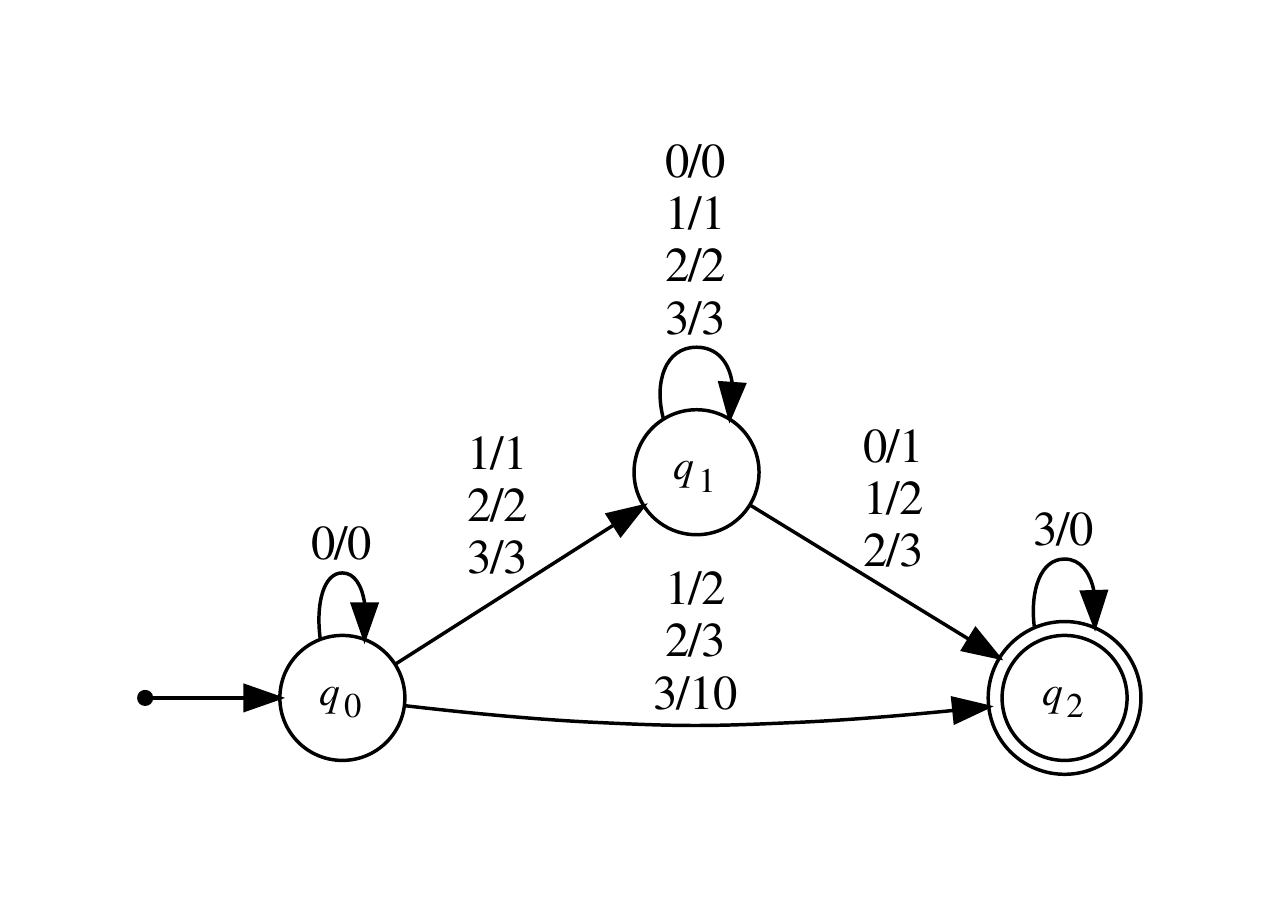}
\end{center}
\caption{A transducer mapping $(n)_4$ to $(n+1)_4$}
\label{trans}
\end{figure}
So from the guessed linear representation for
$(x_n)_{n \geq 0}$ we easily deduce a
linear representation for $(x_{n+1})_{n \geq 0}$.  It is of 
rank $15$.

Now from these two linear representations, we can use 
an obvious ``tensor product''-style construction to get the linear
representation for the first difference sequence $(x_{n+1} -x_n)_{n \geq 0}$.
This produces a linear representation of rank $20$.
We can do the same thing with $(y_n)_{n \geq 0}$.

Now we can use the Berstel-Reutenauer minimization algorithm 
\cite[\S 3.3]{Berstel&Reutenauer:2010} for
linear representations to get an equivalent
minimized linear representation $(v', \gamma', w')$ for
$((x_{n+1} - x_n, y_{n+1}-y_n))_{n \geq 0}$.  Here is what it looks like:
$$v' =
\left[
\begin{array}{ccc}
1 & 0 & 0 \\
0 & 1 & 0
\end{array}
\right];
\quad
\gamma'(0) =
\left[
\begin{array}{ccc}
1 & 0 & 0 \\
0 & 1 & 0 \\
0 & 1 & 0 
\end{array}
 \right]; \quad
\gamma'(1) =
\left[
\begin{array}{ccccc}
0 & 1 & 0 \\
1 & 0 & 0 \\
1 & 0 & 0
\end{array}
\right];
$$
$$
\gamma'(2) =
\left[
\begin{array}{ccc}
0 & 0 & 1\\
1 & -1 & 1 \\
1 & -1 & 1 
\end{array}
\right]; \quad
\gamma'(3) =
\left[
\begin{array}{ccc}
-1 & 1 & -1 \\
0 & 0 & -1 \\
0 & -1 & 0
\end{array}
\right]; 
\quad
w' =
\left[
\begin{array}{ccc}
0\\
1\\
0\\
\end{array}
\right].$$
Now we use the ``semigroup trick"
(see \cite[pp.~951,954]{Du&Mousavi&Schaeffer&Shallit:2016})
to find an automaton for the
first difference sequence $((x_{n+1} - x_n,y_{n+1}-y_n))_{n \geq 0}$ and prove
that the resulting automaton has only finitely many states.
And no surprise---it is the
same automaton we started with, the one in Section~\ref{three}.
This shows that our 
guessed linear representation for $((x_n, y_n))_{n \geq 0}$
was indeed correct.

The advantage to the representation as a $4$-regular sequence is that
we can compute $(x_n, y_n)$ in time linear in the number of bits of
$n$:   we express $n$ in base 4, and then multiply the appropriate
vectors and $O(\log_4 n)$ matrices.

\section{As a synchronized function}

Finally, perhaps the most interesting representation of
the Hilbert curve is that 
$n$ and $(x_n, y_n)$ are synchronized, but only if we
represent $n$, $x_n$, and $y_n$ in the right way.    The right way
is to represent $n$ in base $4$, but represent $x_n$ and $y_n$
in base $2$!  In other words, the triple $(n, x_n, y_n)$ is
$(4,2,2)$-synchronized \cite{Carpi&Maggi:2001}.

Here our guessing procedure 
guesses a 10-state automaton ${\tt HS} =
(Q, \Sigma, \delta, q_0, F)$, given below.
Here
\begin{itemize}
\item $Q = \{0,1,\ldots, 9 \}$;
\item $\Sigma = \Sigma_4 \times \Sigma_2 \times \Sigma_2$;
\item $q_0 = 0$;
\item $F = \{ 0, 2, 3, 5, 6, 7\}$;
\end{itemize}
and $\delta$ is represented in Table~\ref{syncha}.  (All transitions
not listed go to a dead state that is not
accepting, which just loops to itself on each input.)

\begin{itemize}
\item verify that indeed $\tt HS$ represents a synchronized
function: 
	\begin{itemize}
	\item for each $n$ there is a pair $(x,y)$ such that
${\tt HS}[n][x][y]$ is true
	\item for each $n$ there is only one pair $(x,y)$
such that ${\tt HS}[n][x][y]$ is true
	\end{itemize}
\item verify that ${\tt HS}[0][0][0]=0$;
\item verify that if ${\tt HS}[n][x][y]$ and ${\tt HS}[n+1][x'][y']$
both hold, then $(x'-x, y'-x)$ corresponds to the 
appropriate move ${\tt U, D, R, L}$ computed by the
automaton in Section~\ref{three}.
\end{itemize}

{\footnotesize
\begin{verbatim}
eval fn1 "An Ex,y HS[?msd_4 n][x][y]=@1":
# f(n) takes an ordered pair value for each n

eval fn2 "An,x,y,xp,yp (HS[?msd_4 n][x][y]=@1 & HS[?msd_4 n][xp][yp]=@1) => (x=xp & y=yp)":
# f(n) takes only one value for each n

eval check_up "An (HC[?msd_4 n]=@0 <=> Ex,xp,y,yp HS[?msd_4 n][x][y]=@1 &
    HS[?msd_4 n+1][xp][yp]=@1 & xp=x & yp=y+1)":

eval check_right "An (HC[?msd_4 n]=@1 <=> Ex,xp,y,yp HS[?msd_4 n][x][y]=@1 &
    HS[?msd_4 n+1][xp][yp]=@1 & xp=x+1 & yp=y)":

eval check_down "An (HC[?msd_4 n]=@2 <=> Ex,xp,y,yp HS[?msd_4 n][x][y]=@1 &
    HS[?msd_4 n+1][xp][yp]=@1 & xp=x & yp+1=y)":

eval check_left "An (HC[?msd_4 n]=@3 <=> Ex,xp,y,yp HS[?msd_4 n][x][y]=@1 &
    HS[?msd_4 n+1][xp][yp]=@1 & xp+1=x & yp=y)":
\end{verbatim}
}
\noindent and {\tt Walnut} returns {\tt true}.

Finally, we can use {\tt Walnut} to verify that every pair of natural
numbers $(x,y)$ is hit by one and exactly one
$n$, so our curve is indeed space-filling:
\begin{verbatim}
eval allhit "Ax,y En HS[?msd_4 n][x][y]=@1":
eval hitonce "An,np,x,y (HS[?msd_4 n][x][y]=@1 & HS[?msd_4 np][x][y]=@1)
     => (?msd_4 n=?msd_4 np)":
\end{verbatim}
\noindent and {\tt Walnut} returns {\tt true} for both.

\begin{table}[H]
\begin{center}
\begin{tabular}{c|c|c||c|c|c}
$q$ & $t=[i,j,k]$ & $\delta(q,t)$ & $q$ & $t=[i,j,k]$ & $\delta(q,t)$ \\
\hline
0 & $[0,0,0]$ & 0 & 5 & $[0,0,0]$ & 5 \\
0 & $[1,0,1]$ & 3 & 5 & $[1,0,1]$ & 9 \\
0 & $[1,1,0]$ & 1 & 5 & $[1,1,0]$ & 6 \\
0 & $[2,1,1]$ & 5 & 5 & $[2,1,1]$ & 0 \\
0 & $[3,0,1]$ & 4 & 5 & $[3,0,1]$ & 7 \\
0 & $[3,1,0]$ & 2 & 5 & $[3,1,0]$ & 8 \\
\hline
1 & $[0,0,0]$ & 3 & 6 & $[0,0,0]$ & 9 \\
1 & $[1,1,0]$ & 6 & 6 & $[1,1,0]$ & 1 \\
1 & $[2,1,1]$ & 6 & 6 & $[2,1,1]$ & 1 \\
1 & $[3,0,1]$ & 7 & 6 & $[3,0,1]$ & 4 \\
\hline
2 & $[0,1,1]$ & 4 & 7 & $[0,1,1]$ & 8 \\
2 & $[1,0,1]$ & 8 & 7 & $[1,1,0]$ & 4 \\
2 & $[2,0,0]$ & 8 & 7 & $[2,0,0]$ & 4 \\
2 & $[3,1,0]$ & 9 & 7 & $[3,0,1]$ & 1 \\
\hline
3 & $[0,0,0]$ & 1 & 8 & $[0,1,1]$ & 7 \\
3 & $[1,0,1]$ & 9 & 8 & $[1,0,1]$ & 2 \\
3 & $[2,1,1]$ & 9 & 8 & $[2,0,0]$ & 2 \\
3 & $[3,1,0]$ & 8 & 8 & $[3,1,0]$ & 3 \\
\hline
4 & $[0,1,1]$ & 2 & 9 & $[0,0,0]$ & 6 \\
4 & $[1,1,0]$ & 7 & 9 & $[1,0,1]$ & 3 \\
4 & $[2,0,0]$ & 7 & 9 & $[2,1,1]$ & 3 \\
4 & $[3,0,1]$ & 6 & 9 & $[3,1,0]$ & 2 \\
\end{tabular}
\end{center}
\caption{The synchronized automaton for coordinates of Hilbert's curve.}
\label{syncha}
\end{table}

From the synchronized automaton, given 
$(n)_4 = a_1 a_2 \cdots a_t$, the base-$4$ representation of $n$,
we can easily determine $(x_n, y_n)$, by intersecting
the automaton with an automaton accepting those
strings of the form $[a_1,*,*][a_2,*,*]\cdots[a_t,*,*]$,
where $*$ denotes either $0$ or $1$.  In the resulting automaton,
only one path is accepting, and it can easily be found in
$O(t)$ time through breadth-first or depth-first search.

But the reverse is also true:  given the base-$2$ representations
of $(x,y)$, we can easily determine the $n$ for which
$(x_n, y_n) = (x,y)$, using the same idea.

\section{As an automatic bitmap image}

With the aid of the synchronized representation for {\tt HS}, we
can easily produce a bitmap image of each generation of the Hilbert
curve, as previously done in \cite[Fig.~6]{Shallit&Stolfi:1989}.

To do so, we ``expand'' the curve, inserting rows and column that
are blank, except for when they connect two consecutive points of the
curve.   The following {\tt Walnut} code produces a DFA
{\tt \$hp} describing a bitmap image of the Hilbert curve.
\begin{verbatim}
def even "Em n=2*m":
def odd "Em n=2*m+1":
def hp "($even(x) & $even(y)) |
($even(x) & $odd(y) &
(En (HS[?msd_4 n][x/2][(y-1)/2]=@1 & HS[?msd_4 n+1][x/2][(y+1)/2]=@1)
|(HS[?msd_4 n][x/2][(y+1)/2]=@1 & HS[?msd_4 n+1][x/2][(y-1)/2]=@1)) |
($odd(x) & $even(y) &
(En (HS[?msd_4 n][(x-1)/2][y/2]=@1 & HS[?msd_4 n+1][(x+1)/2][y/2]=@1)
|(HS[?msd_4 n][(x+1)/2][y/2]=@1 & HS[?msd_4 n+1][(x-1)/2][y/2]=@1))":
\end{verbatim}

For example, for generation $7$ we get the image in Figure~\ref{hilb7}.
\begin{figure}[H]
\begin{center}
\includegraphics[width=3.1in]{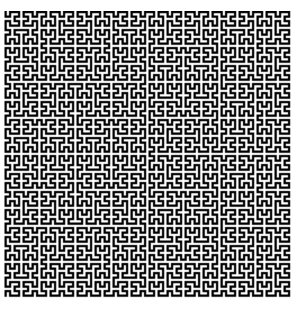}
\end{center}
\caption{Generation $7$ of the Hilbert curve.}
\label{hilb7}
\end{figure}

\section*{Acknowledgments}

Three other descriptions of the Hilbert curve have some
commonalities with the approach given here.
Bially gave a state diagram similar to an automaton
\cite{Bially:1969}.
Gosper \cite[Item 115, pp.~52--53]{Beeler&Gosper&Schroeppel:1972}
gave an iterative measure to determine $(x_n, y_n)$ from
the base-$2$ expansion of $n$.    And Arndt \cite[\S 1.31.1]{Arndt:2011}
also gave a description in terms of iterated morphisms, but 
not quite the same as given here.

I thank Jean-Paul Allouche for helpful discussions.


\begin{thebibliography}{99}

\bibitem{Allouche&Shallit:1992}
J.-P. Allouche and J.~O. Shallit.
\newblock The ring of $k$-regular sequences.
\newblock {\em Theoret. Comput. Sci.} {\bf 98} (1992), 163--197.

\bibitem{Allouche&Shallit:2003b}
J.-P. Allouche and J.~O. Shallit.
\newblock The ring of $k$-regular sequences, {II}.
\newblock {\em Theoret. Comput. Sci.} {\bf 307} (2003), 3--29.

\bibitem{Allouche&Shallit:2003}
J.-P. Allouche and J. Shallit.
\newblock \emph{Automatic Sequences: Theory, Applications, Generalizations}.
\newblock Cambridge University Press, 2003.

\bibitem{Arndt:2011}
J. Arndt.
\newblock {\it Matters Computational---Ideas, Algorithms, Source Code.}
\newblock Springer, 2011.

\bibitem{Beeler&Gosper&Schroeppel:1972}
M. Beeler, R. W. Gosper, and R. Schroeppel.
\newblock Hakmem.
\newblock MIT Artificial Intelligence Laboratory, Report AIM 239, February
1972.
\newblock Available at \url{https://w3.pppl.gov/~hammett/work/2009/AIM-239-ocr.pdf}.

\bibitem{Berstel&Reutenauer:2010}
J.~Berstel and C.~Reutenauer.
\newblock {\em Noncommutative Rational Series with Applications}, Vol.~137 of
  {\em Encyclopedia of Mathematics and Its Applications}.
\newblock Cambridge University Press, 2010.

\bibitem{Bially:1969}
T. Bially.
\newblock Space-filling curves:  their generation and their
application to bandwidth reduction.
\newblock {\it IEEE Trans. Info. Theory} {\bf IT-15} (1969),
658--664.

\bibitem{Breinholt&Schierz:1998}
G. Breinholt and C. Schierz.
\newblock Algorithm 781: Generating Hilbert's space-filling curve by recursion.
\newblock {\it ACM Trans. Math. Software} {\bf 24} (1998), 184--189.

\bibitem{Butz:1969}
A. R. Butz.
\newblock Convergence with Hilbert's space filling curve.
\newblock {\it J. Comput. System Sci.} {\bf 3} (1969), 128--146.

\bibitem{Butz:1971}
A. R. Butz.
\newblock Alternative algorithm for Hilbert's space-filling curve.
\newblock {\it IEEE Trans. Comput.} {\bf 20} (1971), 424--426.

\bibitem{Carpi&Maggi:2001}
A.~Carpi and C.~Maggi.
\newblock On synchronized sequences and their separators.
\newblock {\em RAIRO Inform. Th\'eor. App.} {\bf 35} (2001), 513--524.

\bibitem{Du&Mousavi&Schaeffer&Shallit:2016}
C. F. Du, H. Mousavi, L. Schaeffer, and J. Shallit.
\newblock Decision algorithms for Fibonacci-automatic words III: enumeration and abelian properties.
\newblock {\it Int. J. Found.  Comput. Sci.} {\bf 27} (2016) 943--963.

\bibitem{Hilbert:1891}
D. Hilbert.
\newblock {\"U}ber die stetige Abbildung einer Linie auf ein Fl{\"a}chenst{\"u}ck.
\newblock {\it Math. Annalen} {\bf 38} (1891), 459--460.

\bibitem{Jagadish:1997}
H. V. Jagadish.
\newblock Analysis of the Hilbert curve for representing two-dimensional space.
\newblock {\it Info. Process. Letters} {\bf 62} (1997), 17--22.

\bibitem{Mousavi:2016}
H.~Mousavi.
\newblock Automatic theorem proving in {{\tt Walnut}}, 2016.
\newblock Preprint available at \url{http://arxiv.org/abs/1603.06017}.

\bibitem{Sagan:1994}
H. Sagan.
\newblock {\it Space-Filling Curves}.
\newblock Springer, 1994.

\bibitem{Shallit:2009}
J. Shallit.
\newblock {\it A Second Course in Formal Languages and Automata Theory}.
\newblock Cambridge University Press, 2009.

\bibitem{Shallit&Stolfi:1989}
J.~O. Shallit and J.~Stolfi.
\newblock Two methods for generating fractals.
\newblock {\em Computers \& Graphics} {\bf 13} (1989), 185--191.

\bibitem{Sloane:2021}
N. J. A. Sloane et al.
\newblock The On-Line Encyclopedia of Integer Sequences, 2021.
\newblock Available at \url{https://oeis.org}.

\end{thebibliography}
\end{document}